\newtheorem{definition}{Definition}
\newtheorem{obs}{Observation}
\newtheorem{theorem}{Theorem}
\newtheorem{lemma}{Lemma}
\newproof{proof}{Proof}
\newcommand{\alg}[1]{\mbox{\sf #1}}  
\newcommand{\comment}[1]{}
\journal{???}
\begin{document}

\begin{frontmatter}

\title{A Randomized Algorithm for Long Directed Cycle\tnoteref{title}}

\author{Meirav Zehavi\corref{cor}}
\ead{meizeh@cs.technion.ac.il}

\tnotetext[title]{{\em Abbreviations}: Long Directed Cycle (LDC).}
\cortext[cor]{Corresponding author.}

\address{Department of Computer Science, Technion IIT, Haifa 32000, Israel}

\begin{abstract}
Given a directed graph $G$ and a parameter $k$, the {\sc Long Directed Cycle (LDC)} problem asks whether $G$ contains a simple cycle on at least $k$ vertices, while the {\sc $k$-Path} problems asks whether $G$ contains a simple path on exactly $k$ vertices.  Given a deterministic (randomized) algorithm for {\sc $k$-Path} as a black box, which runs in time $t(G,k)$, we prove that {\sc LDC} can be solved in deterministic time $O^*(\max\{t(G,2k),4^{k+o(k)}\})$ (randomized time $O^*(\max\{t(G,2k),4^k\})$). In particular, we get that {\sc LDC} can be solved in randomized time $O^*(4^k)$.
\end{abstract}

\begin{keyword}
algorithms \sep parameterized complexity \sep long directed cycle \sep $k$-path
\end{keyword}

\end{frontmatter}

\section{Introduction}

We study the {\sc Long Directed Cycle (LDC)} problem. Given a directed graph $G=(V,E)$ and a parameter $k$, it asks whether $G$ contains a simple cycle on {\em at least} $k$ vertices. At first glance, this problem seems quite different from the well-known {\sc $k$-Path} problem, which asks whether $G$ contains a simple path on {\em exactly} $k$ vertices: while {\sc $k$-Path} seeks a solution whose size is exactly $k$, the size of a solution to {\sc LDC} can be as large as $|V|$. Indeed, in the context of {\sc LDC}, Fomin {\em et al.}~\cite{representative} noted that ``color-coding, and other techniques applicable to {\sc $k$-Path} do not seem to work here.'' 

In this paper, we show that an algorithm for {\sc $k$-Path} can be used as a {\em black box} to solve {\sc LDC} efficiently. More precisely, suppose that we are given a deterministic (randomized) algorithm $ALG$ that uses $t(G,k)$ time and $s(G,k)$ space, and decides whether $G$ contains a simple path on {\em exactly} $k$ vertices directed from $v$ to $u$ for some given vertices $v,u\in V$.\footnote{Known algorithms for {\sc $k$-Path} handle the condition relating to the vertices $v$ and $u$.} Then, we prove that {\sc LDC} can be solved in deterministic time $O^*(\max\{t(G,2k),4^{k+o(k)}\})$ and $O^*(\max\{s(G,k),4^{k+o(k)}\})$ space (if $ALG$ is deterministic), or in randomized time $O^*(\max\{t(G,2k),4^k\})$ and $O^*(s(G,k))$ space (if $ALG$ is randomized).\footnote{The $O^*$ notation hides factors polynomial in the input size.} Somewhat surprisingly, we show that cases that cannot be efficiently handled by calling an algorithm for {\sc $k$-Path}, can be efficiently handled by merely using a combination of a simple partitioning step and BFS.

The first parameterized algorithm for {\sc LDC}, due to Gabow and Nie \cite{gabowNie08}, runs in time $O^*(k^{O(k)})$. Then, Fomin {\em et al.}~\cite{representative} gave a deterministic parameterized algorithm for {\sc LDC} that runs in time $O^*(8^{k+o(k)})$ using exponential-space. Recently, Fomin {\em et al.}~\cite{productFam} and the paper \cite{repFamUniApp} modified the algorithm in \cite{representative} to run in deterministic time $O^*(6.75^{k+o(k)})$ using exponential-space. It is known that {\sc $k$-Path} can be solved in randomized time $O^*(2^k)$ and polynomial-space \cite{williamsKPath}, and deterministic time $O^*(2.59606^k)$ and exponential-space \cite{mixing}. Thus, we immediately obtain that {\sc LDC} can be solved in randomized time $O^*(4^k)$ and polynomial-space, and deterministic time $O^*(6.73953^k)$ and exponential-space.

In the following sections, given a graph $G=(V,E)$ and a set $U\subseteq V$, we let $G[U]$ denote the subgraph of $G$ induced by $U$.

\section{Finding Large Solutions in Polynomial-Time}

We say that an instance $(G,k)$ of {\sc LDC} {\em seems difficult} if $G$ does not contain a directed cycle on $\ell$ vertices for any $\ell\in\{k,k+1,\ldots,2k\}$. Roughly speaking, given such an instance, we are forced to determine whether $G$ contains a {\em large} solution. This case, as noted in \cite{gabowNie08} and \cite{representative}, seems to be the core of difficulty of {\sc LDC}. We show, somewhat surprisingly, that under certain conditions, this case can be solved in polynomial-time. More precisely, this section proves the correctness of the following lemma.

\begin{lemma}\label{lemma:poly}
Let $(G,k)$ be instance of {\sc LDC}, and let $(L,R)$ be a partition of $V$. Then, there is a deterministic polynomial-time algorithm, \alg{PolyAlg}, which satisfies the following conditions.
\begin{itemize}
\item If $(G,k)$ seems difficult, and $G$ contains a simple cycle $v_1\rightarrow v_2\rightarrow\ldots\rightarrow v_t\rightarrow v_1$ such that $t > 2k$, $v_1,v_2,\ldots,v_k\in L$ and $v_{k+1},v_{k+2},\ldots,v_{2k}\in R$, \alg{PolyAlg} accepts. 
\item If $G$ does not contain a simple cycle on at least $k$ vertices, \alg{PolyAlg} rejects.
\end{itemize}
\end{lemma}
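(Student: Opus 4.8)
The plan is to reduce the entire statement to the single task of \emph{constructing} one simple cycle on at least $k$ vertices, and to let the ``seems difficult'' hypothesis supply the length guarantee for free. The first thing I would record is the key consequence of $(G,k)$ seeming difficult: since $G$ then has no simple cycle whose number of vertices lies in $\{k,\dots,2k\}$, \emph{any} simple cycle on at least $k$ vertices that we ever exhibit automatically has more than $2k$ vertices, hence is a legitimate solution. I would therefore design \alg{PolyAlg} so that it accepts only after it has explicitly produced a simple cycle on at least $k$ vertices and verified both its simplicity and its length. With this convention the second bullet is immediate: if $G$ has no simple cycle on at least $k$ vertices then there is nothing for \alg{PolyAlg} to produce, so it rejects. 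All the real work then concerns the first bullet, namely that whenever the structured witness cycle exists the search provably succeeds.

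For the first bullet I would exploit the promised shape of the witness, and this is precisely where the partition $(L,R)$ enters as the ``partitioning step''. I would write the cycle as the concatenation of an $L$-run $A = v_1\to\cdots\to v_k$ lying entirely in $G[L]$, the crossing edge $v_k\to v_{k+1}$, an $R$-run $B = v_{k+1}\to\cdots\to v_{2k}$ lying entirely in $G[R]$, and a return $D = v_{2k}\to\cdots\to v_t\to v_1$; since the cycle is simple, $D$ meets $A$ and $B$ only at $v_{2k}$ and $v_1$, so $D$ avoids $\{v_2,\dots,v_{2k-1}\}$. The algorithm would iterate over the $O(n^2)$ candidates for the crossing edge $(v_k,v_{k+1})$ (and, as needed, over candidate endpoints $v_1\in L$ and $v_{2k}\in R$), and for each candidate use BFS to recover the pieces: a length-controlled BFS inside $G[L]$ to trace $k$ vertices of the $L$-run up to $v_k$, a length-controlled BFS inside $G[R]$ to trace $k$ vertices of the $R$-run out of $v_{k+1}$, and a BFS in $G$ to route the return from $v_{2k}$ back to $v_1$. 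Forcing the two monochromatic runs to have length $k$ via the layered/colour-constrained BFS (rather than hoping a plain shortest path is long) is what pins the total size at roughly $2k$, and keeping each run inside a single colour class of $(L,R)$ is what prevents the two runs from being confused.

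The hard part, and the step where the no-medium-cycle hypothesis is indispensable, is guaranteeing that these BFS pieces assemble into a genuinely \emph{simple} cycle that still has at least $k$ distinct vertices. The naive hope ``a shortest path here is forced to be long'' is exactly what fails: when $t$ is large, chords or off-run detours can create cycles that are either very long ($>2k$) or very short ($<k$) but never medium, so they escape the forbidden window and leave the relevant shortest paths short. Thus the reassembled closed walk may repeat vertices or collapse in length, and I expect this to be the main obstacle. My plan to defeat it is to use the ``seems difficult'' condition as a cleanup tool rather than a length tool: any revisited vertex inside the assembled walk cuts out a closed sub-walk, which contains a simple cycle whose length cannot lie in $\{k,\dots,2k\}$; a repeat that would excise a medium stretch is therefore \emph{impossible}, and the only repeats that can occur excise sub-$k$ cycles, which I can discard while arguing that at least $k$ vertices survive. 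Making this accounting precise—choosing which pieces to fix and which lengths to force so that, after removing only the permissible short cycles, at least $k$ distinct vertices remain on a single simple cycle—is the crux; once it is done, the bound ``more than $2k$'' follows for free from the same dichotomy, and soundness is guaranteed because \alg{PolyAlg} only accepts on an explicitly verified cycle.
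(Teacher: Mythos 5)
Your soundness argument (accept only after exhibiting an explicitly verified simple cycle on at least $k$ vertices, so the second bullet is immediate) matches the paper's, and your decomposition of the witness into an $L$-run, a crossing edge, an $R$-run and a return arc is the right picture. But the completeness half has a genuine gap, which you yourself flag as ``the crux'': the accounting that is supposed to show the BFS pieces assemble into a simple cycle on at least $k$ vertices is never carried out, and as planned it cannot be. Two concrete problems. First, the primitive you invoke --- a ``length-controlled/layered BFS'' that traces a \emph{simple} path on exactly $k$ vertices inside $G[L]$ --- is not a polynomial-time primitive: finding a simple path of prescribed length is precisely the {\sc $k$-Path} problem itself; layered BFS produces walks of prescribed length, not simple paths. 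Second, the cleanup step is unsound as stated: the ``seems difficult'' hypothesis constrains the lengths of \emph{simple cycles}, whereas what you excise at a repeated vertex is a closed subwalk, whose size is not controlled by that hypothesis; moreover, an excised piece may well contain a \emph{long} ($>2k$-vertex) cycle, so repeated excision can legitimately destroy all but a handful of vertices. Your claim that ``the only repeats that can occur excise sub-$k$ cycles'' is therefore unjustified, and no argument is given that $k$ distinct vertices survive.

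The paper avoids both problems with two ideas your proposal is missing. First, it does not look for exact-length paths: it computes a \emph{shortest} path $P=(V_P,E_P)$ from $v$ to $u$ in $G[L]$ by ordinary BFS, skips the pair unless $|V_P|=k$, and then \emph{proves} that for the right pair ($v=v_1$, $u=v_k$ on a structured cycle $C$ chosen to minimize $t$) the shortest path really has exactly $k$ vertices: if it were shorter, splicing it with the return arc $v_{k+1}\rightarrow\cdots\rightarrow v_t\rightarrow v_1$ of $C$ would give a cycle on fewer than $t$ vertices that still contains $v_{k+1},\ldots,v_{2k}$ (these lie in $R$, hence off the $L$-path), i.e.\ a cycle on at least $k$ vertices --- contradicting ``seems difficult'' together with the minimality of $C$. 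Second, it never needs cleanup, because simplicity is enforced structurally rather than repaired afterwards: the return path is sought in $G[V\setminus(V_P\setminus\{v,u\})]$, the graph with the interior of $P$ deleted, so any path found there closes $P$ into a simple cycle on at least $k$ vertices automatically; the same minimal-counterexample argument shows that the return arc of $C$ avoids the interior of \emph{every} $k$-vertex simple $v_1$--$v_k$ path in $G[L]$, hence survives the deletion. The minimal choice of $C$ is the tool that makes both steps work, and it is exactly what your plan lacks; without it (or a substitute), the first bullet remains unproven.
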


\begin{proof}
The pseudocode of \alg{PolyAlg} is given in Algorithm \ref{alg:poly}. Clearly, if the algorithm accepts, there exist two distinct vertices $v$ and $u$ such that $G$ contains two simple internally vertex disjoint paths, $P=(V_P,E_P)$ (from $v$ to $u$) and $P'=(V_P',E_P')$ (from $u$ to $v$), where $|V_P|=k$. In this case, $G$ contains a simple cycle, which consists of these paths, on at least $k$ vertices. Thus, the second item is correct.

\begin{algorithm}[!ht]
\caption{\alg{PolyAlg}($G=(V,E),k,L,R$)}\label{alg:poly}
\begin{algorithmic}[1]
\FORALL{$v\in L$ and $u\in L\setminus\{v\}$}\label{step:ite}
	\STATE Use BFS to find a simple path $P=(V_P,E_P)$ from $v$ to $u$ in $G[L]$ that minimizes $|V_P|$.
	\IF{$|V_P|\neq k$ or the path $P$ does not exist}\label{step:if1}
	 	\item Skip the rest of this iteration.
	 \ENDIF
	\STATE Use BFS to find a simple path $P'=(V_P',E_P')$ from $u$ to $v$ in $G[V\setminus (V_P\setminus\{v,u\})]$ that minimizes $|V_P'|$.
	\IF{the path $P'$ exists}\label{step:if2}
		\STATE Accept.
	\ENDIF
\ENDFOR
\STATE Reject.
\end{algorithmic}
\end{algorithm}

Now, we turn to prove the first item. To this end, suppose that the condition of this item is true. Then, we can let $C=v_1\rightarrow v_2\rightarrow\ldots\rightarrow v_t\rightarrow v_1$ be a simple cycle in $G$ such that $t > 2k$, $v_1,v_2,\ldots,v_k\in L$ and $v_{k+1},v_{k+2},\ldots,v_{2k}\in R$, {\em which minimizes} $t$. We need the following observations.

\begin{obs}
The number of vertices on the shortest path from $v_1$ to $v_k$ in $G[L]$ is exactly $k$.
\end{obs}

\begin{proof}
The existence of $C$ implies that we can let $P=(V_P,E_P)$ denote a path from $v_1$ to $v_k$ in $G[L]$ that minimizes $|V_P|$, and that we can assume that $|V_P|\leq k$. We furhter denote $P=u_1\rightarrow u_2\rightarrow\ldots\rightarrow u_{|V_P|}$, where $u_1=v_1$ and $u_{|V_P|}=v_k$. It remains to show that $|V_P|=k$. Suppose, by way of contradiction, that $|V_P| < k$. Let $v_i$ be the first vertex on the path $v_{k+1}\rightarrow v_{k+2}\rightarrow\ldots\rightarrow v_t\rightarrow v_1$ that belongs to $V_P$. Then, we can define a simple cycle $C'$ in $G$ as follows.
\begin{itemize}
\item If $i=1$: $C'=v_{k+1}\rightarrow v_{k+2}\rightarrow\ldots\rightarrow v_t\rightarrow (v_1=u_1)\rightarrow u_2\rightarrow\ldots\rightarrow (u_{|V_P|}=v_k)\rightarrow v_{k+1}$.
\item Else: Let $j$ be the index such that $v_i=u_j$. Then, $C'=v_{k+1}\rightarrow v_{k+2}\rightarrow\ldots\rightarrow v_{i-1}\rightarrow (v_i=u_j)\rightarrow u_{j+1}\rightarrow\ldots\rightarrow (u_{|V_P|}=v_k)\rightarrow v_{k+1}$.
\end{itemize}

Clearly, the number of vertices of $C'$ is smaller than $t$. Therefore, by the choice of $C$ and since $(G,k)$ is a seemingly difficult instance of {\sc LDC}, we have that $C'$ is a cycle on less than $k$ vertices. However, since $V_P\subseteq L$ and $v_{k+1},v_{k+2},\ldots,v_{2k}\in R$ (where $R=V\setminus L$), we have that $2k<i$. This implies that $C'$ is a cycle on at least $k$ vertices, and thus we have reached a contradiction.\qed
\end{proof}

\begin{obs}
Let $P=(V_P,E_P)$ be a simple path from $v_1$ to $v_k$ in $G[L]$ such that $|V_P|=k$. Then, $G[V\setminus (V_P\setminus\{v_1,v_k\})]$ contains a path from $v_k$ to $v_1$.
\end{obs}

\begin{proof}
Denote $P=u_1\rightarrow u_2\rightarrow\ldots\rightarrow u_k$, where $u_1=v_1$ and $u_k=v_k$. If $V_P\cap \{v_{k+1},v_{k+2},\ldots,v_t\}=\emptyset$, then the claim is clearly true, since then $v_k\rightarrow v_{k+1}\rightarrow\ldots\rightarrow v_t\rightarrow v_1$ is a path in $G[V\setminus (V_P\setminus\{v_1,v_k\})]$. Suppose, by way of contradiction, that $V_P\cap \{v_{k+1},v_{k+2},\ldots,v_t\}\neq\emptyset$. Then, we can let $v_i$ be the first vertex on the path $v_{k+1}\rightarrow v_{k+2}\rightarrow\ldots\rightarrow v_t$ that belongs to $V_P$. Let $j$ be the index such that $v_i=u_j$. We have that $C'=v_{k+1}\rightarrow v_{k+2}\rightarrow\ldots\rightarrow v_{i-1}\rightarrow (v_i=u_j)\rightarrow u_{j+1}\rightarrow\ldots\rightarrow (u_k=v_k)\rightarrow v_{k+1}$ is a simple cycle in $G$. Now, we reach a contradiction in the same manner as it is reached in the last paragraph of the proof of the previous observation.\qed
\end{proof}

Consider the iteration of Step \ref{step:ite} that corresponds to $v=v_1$ and $u=v_k$. The first observation implies that the condition of Step \ref{step:if1} is false. Next, the second observation implies that the condition of Step \ref{step:if2} is true, and therefore \alg{PolyAlg} accepts.\qed
\end{proof}

\section{Computing the Sets $L$ and $R$}

In this section we observe that the computation of the sets $L$ and $R$ can merely rely on a simple partitioning step. To this end, we need the following definition and known result.

\begin{definition}\label{def:universalSet}
Let ${\cal F}$ be a set of functions $f: \{1,2,\ldots,n\}\rightarrow \{0,1\}$. We say that ${\cal F}$ is an $(n,t)$-universal set if, for every subset $I\subseteq\{1,2,\ldots,n\}$ of size $t$ and a function $f':I\rightarrow\{0,1\}$, there is a function $f\in{\cal F}$ such that, for all $i\in I$, $f(i)=f'(i)$.
\end{definition}

\begin{lemma}[\cite{splitter}]\label{lemma:splitter}
There is a deterministic algorithm that given a pair of integers $(n,t)$, computes in $O^*(2^{t+o(t)})$ time and space an $(n,t)$-universal set ${\cal F}\subseteq 2^{\{1,2,\ldots,n\}}$ of size $O^*(2^{t+o(t)})$.
\end{lemma}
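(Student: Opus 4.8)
The target size $O^*(2^{t+o(t)})$ is, up to the $o(t)$ slack, information-theoretically forced, and this is the right starting point. The plan is first to see, via the probabilistic method, that a family of essentially this size \emph{exists}, and then to argue that the entire content of the lemma is to make the construction \emph{deterministic} and explicit without paying more than a $2^{o(t)}$ factor. For the existence part I would pick $N$ functions $f:\{1,\dots,n\}\to\{0,1\}$ independently and uniformly at random. A fixed pattern $f':I\to\{0,1\}$ on a fixed $t$-set $I$ is matched by a single random $f$ with probability exactly $2^{-t}$, so the probability that none of the $N$ functions matches it is at most $\exp(-N2^{-t})$. Taking a union bound over the at most ${n\choose t}2^t\le(2n)^t$ pairs $(I,f')$ shows that $N=O(2^t\cdot t\log n)=O^*(2^t)$ random functions form an $(n,t)$-universal set with high probability. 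Thus the near-optimal size is easy to reach probabilistically; the difficulty is entirely in derandomization.

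To derandomize, I would use two standard reductions. First, to remove the dependence on the large ground set, I would take a perfect hash family $\mathcal{H}$ of functions $h:\{1,\dots,n\}\to\{1,\dots,t^2\}$ that is injective on every $t$-set; such families of size $t^{O(1)}\log n$ are constructible deterministically. Given a $(t^2,t)$-universal family $\mathcal{G}$ over the small ground set, the family $\{\,g\circ h:g\in\mathcal{G},\,h\in\mathcal{H}\,\}$ is $(n,t)$-universal: for any $I$ and $f'$, choose $h\in\mathcal{H}$ injective on $I$, transport $f'$ to the $t$-set $h(I)\subseteq\{1,\dots,t^2\}$, realize that pattern by some $g\in\mathcal{G}$, and observe that $g\circ h$ realizes $f'$ on $I$. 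Since $|\mathcal{H}|=t^{O(1)}\log n=O^*(1)$, this stage costs only a hidden polynomial factor and reduces everything to constructing a $(t^2,t)$-universal set of size $2^{t+o(t)}$. Second, on the small ground set I would use \emph{splitters}: a family that partitions every $t$-set into $\ell$ balanced blocks (each receiving at most $\lceil t/\ell\rceil$ elements), so that an arbitrary pattern on the whole set decomposes into independent patterns on the blocks, each handled by a universal set for the smaller parameter $\lceil t/\ell\rceil$ and recombined by taking products.

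The hard part, and the real reason this is a cited theorem rather than a one-line argument, is keeping the \emph{constant in the exponent equal to $1$}: the overhead beyond $2^t$ must be $2^{o(t)}$, not $2^{\Theta(t)}$. This is exactly where a naive implementation fails. The obvious product recurrence
\[
U(m,t)\ \le\ |\mathcal{B}|\cdot U\!\left(m,\ \lceil t/\ell\rceil\right)^{\ell}
\]
does reconstitute the leading $2^t$ (the block parameters sum back to $t$, so the base cases multiply to $2^t$), but when it is unrolled the accumulated branching multiplies even the cheapest per-split balancing cost $|\mathcal{B}|\ge 2$ across the roughly $t$ leaves' worth of nodes, producing a spurious $2^{\Theta(t)}$ factor and hence a constant strictly larger than $1$. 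Overcoming this requires the carefully engineered, near-optimal splitter and perfect-hashing constructions of \cite{splitter}, in which the branching factor and the $O(\log t)$ recursion depth are balanced so that each of the $O(\log t)$ levels contributes only a $t^{O(1)}$ factor to the \emph{total} size rather than to each node; the telescoped overhead is then $t^{O(\log t)}=2^{O(\log^2 t)}=2^{o(t)}$. Combining this with the perfect-hash reduction yields an $(n,t)$-universal family of size $O^*(2^{t+o(t)})$, and since every splitter, hash family, and base case is produced explicitly, the construction time and space are dominated by the same quantity, as claimed.
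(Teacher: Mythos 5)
There is no in-paper proof to compare against here: the paper imports this lemma verbatim from \cite{splitter} (the splitter/universal-set construction of Naor, Schulman and Srinivasan) and uses it strictly as a black box. Your sketch is therefore best judged against that cited construction, and it reconstructs its architecture correctly: the probabilistic argument (union bound over at most ${n \choose t}2^t$ pairs $(I,f')$, giving existence at size $O(2^t\cdot t\log n)=O^*(2^t)$) is sound; the reduction of the ground set from $n$ to $t^2$ via a $t$-perfect hash family of size $t^{O(1)}\log n$, together with the composition $g\circ h$, is exactly how the dependence on $n$ is confined to a polynomial factor; and the splitter recursion with overhead telescoping to $t^{O(\log t)}=2^{O(\log^2 t)}=2^{o(t)}$ matches the standard form of the NSS theorem (universal sets of size $2^t t^{O(\log t)}\log n$, constructible in time linear in that size times $n$). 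Your diagnosis of where a naive recurrence loses a $2^{\Theta(t)}$ factor, and why the exponent's constant is the whole point, is also the right way to understand why this is a cited theorem rather than a routine exercise.

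One caveat you should be aware of: as a standalone proof your proposal is not self-contained, because at the decisive step --- the explicit construction of the near-optimal splitters and the verification that each of the $O(\log t)$ recursion levels costs only a $t^{O(1)}$ factor in total --- you appeal to \cite{splitter} itself, which is the very result being proved. In the context of this paper that is harmless (the paper does exactly the same thing, citing rather than proving), but if the goal were a genuine proof of the lemma, the content of that black box is precisely what would still need to be supplied.
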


Now, we turn to prove the following simple observations.

\begin{obs}\label{obs:LRDet}
Let $(G=(V,E),k)$ be a instance of {\sc LDC}. Then, there is a deterministic algorithm, \alg{DetLRAlg}, that uses $O^*(4^{k+o(k)})$ time and space, and returns a set $S=\{(L,R): L\subseteq V, R=V\setminus L\}$ of size $O^*(4^{k+o(k)})$ such that the following condition is satisfied.
\begin{itemize}
\item For any simple cycle $v_1\rightarrow v_2\rightarrow\ldots\rightarrow v_t\rightarrow v_1$ of $G$ such that $t\geq 2k$, there exists $(L,R)\in S$ such that $v_1,v_2,\ldots,v_k\in L$ and $v_{k+1},v_{k+2},\ldots,v_{2k}\in R$.
\end{itemize}
\end{obs}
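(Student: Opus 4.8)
The plan is to use the $(n,t)$-universal set machinery from Lemma~\ref{lemma:splitter} directly. The key observation is that the condition we need to satisfy concerns only the first $2k$ vertices $v_1,v_2,\ldots,v_{2k}$ of the cycle: we need the first $k$ of them placed in $L$ and the next $k$ of them placed in $R$. Since a partition $(L,R)$ of $V$ is exactly the same object as a function $f:V\rightarrow\{0,1\}$ (say $f(v)=0$ iff $v\in L$), what we are really asking for is a family of such functions that, restricted to any particular set of $2k$ vertices, realizes the specific $0/1$ pattern we care about. Identifying $V$ with $\{1,2,\ldots,n\}$ where $n=|V|$, this is precisely what an $(n,2k)$-universal set guarantees.

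First I would invoke Lemma~\ref{lemma:splitter} with the pair $(n,2k)$ to compute, in $O^*(2^{2k+o(2k)})=O^*(4^{k+o(k)})$ time and space, an $(n,2k)$-universal set ${\cal F}$ of size $O^*(4^{k+o(k)})$. Each $f\in{\cal F}$ is then converted into a partition $(L_f,R_f)$ by setting $L_f=\{v\in V: f(v)=0\}$ and $R_f=V\setminus L_f$; the output set $S$ is the collection of all these partitions. Since $2^{2k+o(2k)}=(2^2)^{k+o(k)}=4^{k+o(k)}$, both the running time, the space usage, and the size of $S$ meet the claimed bound, and the conversion is clearly polynomial per function.

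To verify correctness, I would fix an arbitrary simple cycle $v_1\rightarrow v_2\rightarrow\ldots\rightarrow v_t\rightarrow v_1$ with $t\geq 2k$. Because the cycle is simple, the $2k$ vertices $v_1,v_2,\ldots,v_{2k}$ are pairwise distinct, so they form a subset $I\subseteq\{1,2,\ldots,n\}$ of size exactly $2k$. I then define the target pattern $f':I\rightarrow\{0,1\}$ by $f'(v_i)=0$ for $1\leq i\leq k$ and $f'(v_i)=1$ for $k+1\leq i\leq 2k$. By the defining property of an $(n,2k)$-universal set, there exists $f\in{\cal F}$ agreeing with $f'$ on all of $I$. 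The corresponding partition $(L_f,R_f)\in S$ then satisfies $v_1,\ldots,v_k\in L_f$ and $v_{k+1},\ldots,v_{2k}\in R_f$, which is exactly the required condition.

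The argument is essentially a direct translation, so there is no serious combinatorial obstacle; the only point needing a little care is the distinctness of $v_1,\ldots,v_{2k}$, which is what lets us treat $I$ as a genuine $2k$-element subset rather than a multiset, and the bookkeeping that $2^{2k+o(2k)}$ collapses to $4^{k+o(k)}$ in the $O^*$ notation. I would make sure to state explicitly that simplicity of the cycle together with $t\geq 2k$ guarantees $|I|=2k$, since the universal-set definition requires a subset of size exactly $t$.
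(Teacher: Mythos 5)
Your proposal is correct and follows exactly the paper's own approach: invoke Lemma~\ref{lemma:splitter} with parameters $(|V|,2k)$ to build an $(|V|,2k)$-universal set, translate each function into a partition $(L_f,R_f)$, and apply Definition~\ref{def:universalSet} to the $2k$ distinct cycle vertices with the pattern $0$ on the first $k$ and $1$ on the next $k$. The only difference is that you spell out the correctness verification that the paper dismisses as immediate, which is fine.
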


\begin{proof}
\alg{DetLRAlg} arbitrarily orders $V$, and denotes $V=\{v_1,v_2,\ldots,v_{|V|}\}$ accordingly. It obtains an $(|V|,2k)$-universal set ${\cal F}$ by relying on Lemma \ref{lemma:splitter}. Then, it defines $L_f=\{v_i\in V: f(i)=0\}$ and $R_f=V\setminus L$ for each $f\in {\cal F}$, and lets $S=\{(L_f,R_f): f\in{\cal F}\}$. The correctness and running time of the algorithm follow immediately from Definition \ref{def:universalSet} and Lemma \ref{lemma:splitter}.\qed
\end{proof}

\begin{obs}\label{obs:LRRand}
Let $(G=(V,E),k)$ be a instance of {\sc LDC}. Then, there is a randomized algorithm, \alg{RandLRAlg}, with polynomial time and space complexities, that returns a partition $(L,R)$ of $V$. Moreover, if \alg{RandLRAlg} is called $c\cdot 4^k$ times for some $c\geq 1$, and $G$ contains a simple cycle $v_1\rightarrow v_2\rightarrow\ldots\rightarrow v_t\rightarrow v_1$ such that $t\geq 2k$, then with probability at least $(1-e^{-c})$, at least one of the calls returns a pair $(L,R)$ such that $v_1,v_2,\ldots,v_k\in L$ and $v_{k+1},v_{k+2},\ldots,v_{2k}\in R$.
\end{obs}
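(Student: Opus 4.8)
The plan is to design \alg{RandLRAlg} as a trivial randomized partitioning step: it assigns each vertex $v_i\in V$ independently and uniformly at random to either $L$ or $R$, say by flipping a fair coin for each vertex and placing $v_i$ in $L$ if the coin shows heads and in $R$ otherwise. This clearly runs in polynomial time and space, so the only substantive task is to analyze the success probability. The key observation is that we do not need the entire partition to be ``correct''; we only need the $2k$ specific vertices $v_1,\ldots,v_k$ (which must land in $L$) and $v_{k+1},\ldots,v_{2k}$ (which must land in $R$) to be placed correctly, and the placement of the remaining $|V|-2k$ vertices is irrelevant.

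First I would compute the probability that a single call succeeds for a fixed cycle. Since the $2k$ relevant vertices are distinct and each is placed independently, and each of the $2k$ required placements (each either ``in $L$'' or ``in $R$'') occurs with probability exactly $\frac{1}{2}$, the probability that all $2k$ placements are simultaneously correct is $(\frac{1}{2})^{2k}=4^{-k}$. Thus a single call returns the desired pair $(L,R)$ with probability at least $4^{-k}$.

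Next I would boost this probability by repetition. If \alg{RandLRAlg} is called $c\cdot 4^k$ times independently, then the probability that \emph{none} of the calls produces a correct pair is at most $(1-4^{-k})^{c\cdot 4^k}$. Applying the standard inequality $1-x\leq e^{-x}$ with $x=4^{-k}$, this is bounded by $e^{-4^{-k}\cdot c\cdot 4^k}=e^{-c}$. Hence at least one call succeeds with probability at least $1-e^{-c}$, which is exactly the claimed bound.

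I do not expect any real obstacle here; the statement is a routine ``random partition plus amplification'' argument, and the only point requiring a moment of care is to confirm that the $2k$ vertices in question are genuinely distinct (so that their random placements are mutually independent and the $4^{-k}$ bound is valid). This is guaranteed because $v_1,v_2,\ldots,v_t$ are the vertices of a \emph{simple} cycle with $t\geq 2k$, so the first $2k$ of them are pairwise distinct. With independence established, the per-call success probability and the amplification computation both follow immediately.
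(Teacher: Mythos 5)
Your proposal is correct and follows essentially the same route as the paper: an independent fair-coin assignment of each vertex to $L$ or $R$, giving per-call success probability $4^{-k}$ for the $2k$ distinct cycle vertices, followed by the standard amplification bound $(1-4^{-k})^{c\cdot 4^k}\leq e^{-c}$. Your explicit remark that the simplicity of the cycle guarantees the $2k$ vertices are distinct (hence the placements are independent) is a small point the paper leaves implicit, but otherwise the two arguments coincide.
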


\begin{proof}
\alg{RandLRAlg} initializes $L$ to be an empty set, and $R$ to be $V$. For each $v\in V$, with probability $\frac{1}{2}$ it removes $v$ from $R$ and inserts $v$ into $L$. Then, it returns the resulting pair $(L,R)$, which is clearly a partition of $V$.

To prove the correctness of \alg{RandLRAlg}, suppose that $G$ contains a simple cycle $v_1\rightarrow v_2\rightarrow\ldots\rightarrow v_t\rightarrow v_1$ such that $t\geq 2k$. Then, the probability that $v_1,v_2,\ldots,v_k\in L$ and $v_{k+1},v_{k+2},\ldots,v_{2k}\in R$ is $(\frac{1}{2})^{2k}=\frac{1}{4^k}$. Now, if \alg{RandLRAlg} is called $c\cdot 4^k$ times, the probability that none of the calls returns a pair $(L,R)$ such that $v_1,v_2,\ldots,v_k\in L$ and $v_{k+1},v_{k+2},\ldots,v_{2k}\in R$ is $(1-\frac{1}{4^k})^{c\cdot4^k}\leq e^{-c}$.\qed
\end{proof}

\section{Solving the {\sc LDC} Problem}

We are now ready to solve {\sc LDC}. The input for our algorithm, \alg{LDCALg}, consists of an instance $(G,k)$ of {\sc LDC}, an algorithm $ALG$ for {\sc $k$-Path}, and an argument $X\in\{det,rand\}$ that specifies whether $ALG$ is deterministic or randomized. \alg{LDCAlg} first determines whether $G$ contains a simple cycle on $\ell$ vertices, for any $\ell\in\{k,k+1,\ldots,2k\}$ by calling $ALG$. If no such cycle is found, \alg{LDCAlg} examines enough pairs $(L,R)$, computed using the algorithm in Observation \ref{obs:LRDet} or \ref{obs:LRRand}, and accepts {\em iff} \alg{PolyAlg} accepts one of the resulting inputs $(G,k,L,R)$. The pseudocode of \alg{LDCALg} is given in Algorithm \ref{alg:ldc}.

\begin{algorithm}[!ht]
\caption{\alg{LDCAlg}($G=(V,E),k,ALG,X$)}\label{alg:ldc}
\begin{algorithmic}[1]
\FOR{$\ell=k,k+1,\ldots,2k$}\label{step:beginSmall}
	\FORALL{$(u,v)\in E$}
		\STATE Use $ALG$ to determine whether $G$ contains a simple path on exactly $\ell$ vertices directed from $v$ to $u$. If the answer is positive, accept.
	\ENDFOR
\ENDFOR\label{step:endSmall}
\IF{$X=det$}
	\STATE Let $S$ be the set returned by \alg{DetLRAlg} (see Observation \ref{obs:LRDet}), ordered arbitrarily. Moreover, let $x=|S|$, and let \alg{PartitionAlg} be a procedure that when called at the $i^{st}$ time, returns the $i^{st}$ pair $(L,R)$ in $S$.
\ELSE
	\STATE Let $x=10\cdot 4^k$, and let \alg{PartitionAlg} be \alg{RandLRAlg} (see Observation \ref{obs:LRRand}).
\ENDIF
\FOR{$i=1,2,\ldots,x$}
	\STATE Call \alg{PartitionAlg} to obtain a pair $(L,R)$.
	\STATE If \alg{PolyAlg}$(G,k,L,R)$ accepts: Accept.
\ENDFOR
\STATE Reject.
\end{algorithmic}
\end{algorithm}

\begin{theorem}
Let $ALG$ be an algorithm that uses $t(G,k)$ time and $s(G,k)$ space, and decides whether $G$ contains a simple path on {\em exactly} $k$ vertices directed from $v$ to $u$ for some given vertices $v,u\in V$.
Then, \alg{LDCAlg} solves {\sc LDC} in deterministic time $O^*(\max\{t(G,2k),4^{k+o(k)}\})$ and $O^*(\max\{s(G,k),4^{k+o(k)}\})$ space (if $ALG$ is deterministic), or in randomized time $O^*(\max\{t(G,2k),4^k\})$ and $O^*(s(G,k))$ space (if $ALG$ is randomized).
\end{theorem}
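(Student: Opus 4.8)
The plan is to prove correctness (soundness and completeness) and then verify the complexity bounds, carrying the deterministic and randomized cases in parallel. First I would establish soundness: \alg{LDCAlg} accepts only when $G$ genuinely contains a simple cycle on at least $k$ vertices. There are exactly two places where acceptance can occur. In the first loop (Steps \ref{step:beginSmall}--\ref{step:endSmall}), the algorithm accepts when $ALG$ reports a simple path on exactly $\ell$ vertices from $v$ to $u$ for some edge $(u,v)\in E$; appending the edge $(u,v)$ closes this path into a simple cycle on $\ell\geq k$ vertices. In the partition loop, acceptance occurs when \alg{PolyAlg}$(G,k,L,R)$ accepts, and the second item of Lemma \ref{lemma:poly} guarantees that this cannot happen unless $G$ contains a cycle on at least $k$ vertices. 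In the randomized case this argument additionally relies on $ALG$ producing no false positives, which holds for the black-box $k$-path algorithms we invoke.

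Next I would prove completeness. Suppose $G$ contains a simple cycle on at least $k$ vertices, and let $\ell_0$ be the minimum length of any such cycle. I would split into two cases. If $\ell_0\in\{k,\ldots,2k\}$, fix a shortest such cycle $w_1\rightarrow\cdots\rightarrow w_{\ell_0}\rightarrow w_1$; then in the iteration with the edge $(u,v)=(w_{\ell_0},w_1)$ and parameter $\ell=\ell_0$, the path $w_1\rightarrow\cdots\rightarrow w_{\ell_0}$ is detected by $ALG$, so the first loop accepts. If instead $\ell_0>2k$, then $G$ contains no cycle on any $\ell\in\{k,\ldots,2k\}$, so $(G,k)$ \emph{seems difficult}, and $G$ contains a simple cycle $v_1\rightarrow\cdots\rightarrow v_t\rightarrow v_1$ with $t>2k$. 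The key step is to supply a partition $(L,R)$ satisfying the hypothesis of the first item of Lemma \ref{lemma:poly}: in the deterministic case Observation \ref{obs:LRDet} guarantees that $S$ contains a pair $(L,R)$ with $v_1,\ldots,v_k\in L$ and $v_{k+1},\ldots,v_{2k}\in R$ (note $t>2k$ trivially gives $t\geq 2k$), while in the randomized case Observation \ref{obs:LRRand}, invoked $x=10\cdot 4^k$ times (so $c=10$), returns such a pair with probability at least $1-e^{-10}$. For this $(L,R)$ every precondition of the first item of Lemma \ref{lemma:poly} is met, so \alg{PolyAlg} accepts, and since \alg{LDCAlg} iterates over all returned pairs it accepts as well.

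For the complexity I would bound the two phases separately. The first loop makes $O(k)$ choices of $\ell$ times $O(|E|)$ edges many calls to $ALG$, each on a path of length at most $2k$, contributing $O^*(t(G,2k))$ time; since the space of a single $ALG$-invocation can be reused across iterations, the space contribution is that of one $ALG$-call. The partition phase is where the cases diverge: deterministically, \alg{DetLRAlg} computes and stores $S$ of size $O^*(4^{k+o(k)})$ in $O^*(4^{k+o(k)})$ time and space by Observation \ref{obs:LRDet}, and each of the $|S|$ calls to \alg{PolyAlg} is polynomial by Lemma \ref{lemma:poly}; in the randomized case \alg{RandLRAlg} and \alg{PolyAlg} are polynomial per iteration over the $x=O^*(4^k)$ iterations, with no large table to store. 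Summing the two phases yields the claimed $O^*(\max\{t(G,2k),4^{k+o(k)}\})$ time and $O^*(\max\{s(G,k),4^{k+o(k)}\})$ space in the deterministic case, and $O^*(\max\{t(G,2k),4^k\})$ time and $O^*(s(G,k))$ space in the randomized case.

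The hardest part will be the completeness argument in the case $\ell_0>2k$: one must check that the failure of the first loop certifies that $(G,k)$ seems difficult and that the minimal long cycle has \emph{strictly} more than $2k$ vertices, so as to meet precisely the $t>2k$ precondition of the first item of Lemma \ref{lemma:poly} rather than only the $t\geq 2k$ hypothesis of Observations \ref{obs:LRDet} and \ref{obs:LRRand}. The secondary subtlety, in the randomized setting, is combining the partition-sampling error of Observation \ref{obs:LRRand} with the one-sided error of a randomized $ALG$ to argue a constant overall success probability, which can then be amplified within the $O^*$ bound.
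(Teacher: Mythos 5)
Your proposal is correct and follows essentially the same route as the paper's own proof: soundness via the second item of Lemma \ref{lemma:poly} and correctness of $ALG$, completeness via the case split (a cycle of length in $\{k,\ldots,2k\}$ handled by the first loop, otherwise the instance seems difficult and Observations \ref{obs:LRDet}/\ref{obs:LRRand} supply a partition meeting the first item of Lemma \ref{lemma:poly}), and the same phase-by-phase complexity accounting. Your phrasing of the case split via the minimum long-cycle length $\ell_0$ is just an equivalent formulation, and your remark about one-sided error of a randomized $ALG$ corresponds to the paper's footnote on recovering an explicit path by edge deletion.
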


\begin{proof}
First, observe that the time and space complexities of \alg{LDCAlg} directly follow from the pseudocode, Lemma \ref{lemma:poly} and Observations \ref{obs:LRDet} and \ref{obs:LRRand}. Moreover, by Lemma \ref{lemma:poly} and the correctness of $ALG$, if \alg{LDCAlg} accepts, it is clearly correct  (if $X=rand$, we mean that \alg{LDCAlg} accepts with high probability).\footnote{By iteratively removing edges from $G$, it is easy to see that one can use $ALG$ not only to determine whether $G$ contains a simple path on exactly $\ell$ vertices from $v$ to $u$, but also to return such a path. In this manner, even if $X=rand$, \alg{LCDAlg} can be modified to accept only if $(G,k)$ is a yes-instance.}

Now, to complete the proof, suppose that $(G,k)$ is a yes-instance. If $G$ contains a simple cycle on $\ell$ vertices for some $\ell\in\{k,k+1,\ldots,2k\}$, then one of the calls to $ALG$ accepts, and therefore \alg{LDCAlg} accepts (if $X=rand$, we mean that \alg{LDCAlg} accepts with high probability). Thus, we can next assume that $(G,k)$ seems difficult, and let $C=v_1\rightarrow v_2\rightarrow\ldots\rightarrow v_t\rightarrow v_1$ denote a simple cycle in $G$, where $t>2k$. By Observations \ref{obs:LRDet} and \ref{obs:LRRand}, there is a call to \alg{PartitionAlg} where it returns a pair $(L,R)$ such that $v_1,v_2\ldots,v_k\in L$ and $v_{k+1},v_{k+2},\ldots,v_{2k}\in R$ (in case $X=rand$, we mean that there is such a call with high probability). Then, by Lemma \ref{lemma:poly}, \alg{PolyAlg} accepts, and therefore \alg{LDCAlg} accepts.\qed
\end{proof}

\bibliographystyle{model1-num-names}
\bibliography{references}

\end{document}